 \providecommand{\F}{\mathbb{F}}
\date{}
\newtheorem{lemma}{Lemma}
\newtheorem{theorem}{Theorem}
\newtheorem{defn}{Definition}
\def \mC {\mathcal{C}}
\def \mB {\mathcal{B}}
\def \mC {\mathcal{C}}
\def \Xi {{X^{[i]}}}
\def\rank {{\rm rank }}
\begin{document}

\onecolumn

\newpage
\title{On the List Decodability of Self-orthogonal\\ Rank Metric Codes}

\author{Shu Liu
\thanks{The author is with Division of Mathematical Sciences, School
of Physical and Mathematical Sciences, Nanyang Technological
University, Singapore 637371, Republic of Singapore
(email: SLIU017@ntu.edu.sg).}
 }
\maketitle

\begin{abstract}
V. Guruswami and N. Resch prove that the list decodability of $\F_q$-linear rank metric codes is as good as that of random rank metric codes in~\cite{venkat2017}. Due to the potential applications of self-orthogonal rank metric codes, we focus on list decoding of them. In this paper, we prove that with high probability, an $\F_q$-linear self-orthogonal rank metric code over $\F_q^{n\times m}$ of rate $R=(1-\tau)(1-\frac{n}{m}\tau)-\epsilon$ is shown to be list decodable up to fractional radius $\tau\in(0,1)$ and small $\epsilon\in(0,1)$ with list size depending on $\tau$ and $q$ at most $O_{\tau, q}(\frac{1}{\epsilon})$. In addition, we show that an $\mathbb{F}_{q^m}$-linear self-orthogonal rank metric code of rate up to the Gilbert-Varshamov bound is $(\tau n, \exp(O_{\tau, q}(\frac{1}{\epsilon})))$-list decodable. 

\end{abstract}

\section{Introduction}
In the late 50's, P. Elias~\cite{P.E1957}, \cite{P.E1991} and J. M. Wozencraft~\cite{J.M.W1985} introduced list decoding.
Compared with unique decoding, list decoding can output a list of codewords which contains the correct transmitted codeword rather than output a unique codeword. Consequentially, the list size of list decoding can be bigger than $1.$ 

In coding theory, there exists a trade-off between the fraction of errors that can be corrected and the rate. The largest list size of the decoder's output is an important parameter in list decoding. In fact, we want to have a small list size. At least two reasons can be argued. The first reason is due to the usefulness of this list. After the output of this list, the next step is to utilize this list to decide what the original transmitted message is. This can be done, by outputting the codeword corresponding to the smallest error. If the list size is exponential, the decision step needs exponential time complexity. The second reason is that this list size provides us with a lower bound for the worst-case complexity of the decoding algorithm itself. So, if we require the decoding algorithm to be efficient, we need the list size to be as small as possible. 

\subsection*{List Decoding of Rank Metric Codes}
Rank metric codes is a set of $n\times m$ matrices over a finite field $\F_q.$ By the ring isomorphism between $\F_q^m$ and $\F_{q^m},$ an $n\times m$ matrix over $\F_q$ can be defined as a vector of length $n$ over the extension field $\F_{q^m}.$ Rank metric codes have been receivced many attention because of their applications in network coding~\cite{Sil2008}, \cite{Koe2008}, storage systems~\cite{Roth1991}, cryptography~\cite{E.M.A.V1991}, \cite{Ove2008}, and space-time coding~\cite{Lus2003}. 

Finding good list decodable rank metric codes attracts more and more researchers. In order to find the limit in which an efficient list decoding is possible, A. Wachter-Zeh provided lower and upper bounds on the list size in~\cite{Ant2013}, \cite{Zeh2012} and showed that the upper bound of the list size is exponential, for any decoding radius beyond half of the minimum distance. In addition, there exists an exponential list size rank metric code for any decoding radius is larger than half of the minimum distance. 
No efficient list decoding can be found for Gabidulin codes if decoding radius beyond the Johnson bound. 
Y. Ding~\cite{Ding2015} reveals that the Singleton bound is the list decoding barrier for any rank metric code.
With high probability, the decoding radius and the rate of random rank metric codes satisfy the Gilbert-Varshamov bound with constant list size. In addition, with high probability, an $\F_q$-linear rank metric code can be list decoded with list decoding radius attaining the Gilbert-Varshamov bound with exponential list size. 
Since efficient list decoding radius of Gabidulin codes cannot be larger than the unique decoding radius, S. Liu, C. Xing and C. Yuan show that with high probability, a random subcode of a Gabidulin code is list decodable with decoding radius far beyond the unique decoding radius in~\cite{Liu2017}. However, for $\F_q$-linear rank metric codes, when the list decoding radius is beyond half of the minimum distance, the list size is exponential. V. Guruswami and N. Resch decrease the list size of $\F_q$-linear rank metric codes and show that it is list decodable as good as random rank metric codes in~\cite{venkat2017}.

\subsection*{Motivation}
There has been some interesting findings on the list decodability for random $\F_q$-linear rank metric codes~\cite{venkat2017},~\cite{Gabriele}. An interesting direction is to see whether these new results can be applied to improve results on specific $\F_q$-linear rank metric codes. Due to its potential application in many fields, the specific $\F_q$-linear rank metric codes that we are interested in is the $\F_q$-linear self-orthogonal rank metric codes. Due to the finding that from list decoding point of view, random $\F_q$-linear rank metric codes perform as well as general rank metric codes, a natural question that follows is whether the performance can still be maintained when we further restrict that the random $\F_q$-linear rank metric codes to be also self-orthogonal. Moreover, based on $\F_q$-linear case, we  investigate how well one can list decode random $\F_{q^m}$-linear self-orthogonal rank metric codes.

\subsection*{Organization}
Firstly, we give some definitions and notations about self-orthogonal rank metric codes, list decoding and quadratic form. Then, we describe how to construct $\F_q$ and  $\F_{q^m}$-linear self-orthogonal rank metric codes based on the quadratic form and analyze the list decodability of them. Ultimately, we draw a conclusion.
\section{Preliminaries}
\subsection{Defined Self-Orthogonal Rank Metric Codes}
Rank metric codes can mainly be interpreted in two different representations. The first representation is to deem each codeword as matrices in $\F_q^{n\times m}.$ Alternatively, we can interpret each element of a rank metric code as a vector in $\F_{q^m}^n.$ 
In the first representation of codewords as matrices, linear codes are considered to be linear over $\F_q.$ On the other hand, the linearity considered when seeing a rank metric code as a set of vectors is assumed to be $\F_{q^m}$ linearity. The two different representations of rank metric codes provide us with two different ways in defining self-orthogonal rank metric codes ($\F_q$-linear and $\F_{q^m}$-linear).

\subsubsection*{$\F_q$-linear self-orthogonal rank metric codes}

To properly define an $\F_q$-linear self-orthogonal rank metric code, first we briefly provide the definitions and notations of matrix representation for rank metric codes. A rank metric code $\mC$ contains $n\times m$ matrices over $\F_q$ for integers $n,m$ and prime power $q.$ Through the use of matrix transpose, for simplicity, we can just assume that $n$ is at most $m.$ The rank distance between two matrices $X,Y\in\F_q^{n\times m}$ is then defined by the rank norm of the difference between the two, $d_R(X,Y):=\rank(X-Y).$ These parameters can then be used to formulate two relative parameters, namely relative minimum rank distance $\delta$ and rate $R.$ These relative parameters are then defined by
\begin{equation*}
\delta(\mC)=\frac{\min_{X\neq Y\in \mC}\{d_R(X,Y)\}-1}{n}\quad \mathrm{and} \quad R(\mC)=\frac{\log_q |\mC|}{mn}.
\end{equation*}
The (Delsarte) dual of $\mC$ is then defined to be 
$$\mC^\perp = \{X\in\F_q^{n\times m}| Tr(CX^T) = 0, \forall C\in\mC\}.$$

Based on this definition of dual, an $\F_q$-linear rank metric code $\mC$ is said to be self-orthogonal if $\mC\subseteq \mC^\perp.$ A property of $\F_q$-linear self-orthogonal rank metric code that is readily verified is that its $\F_q$-dimension should be at most $\frac{nm}{2}$ and hence its rate $R$ must be in the range $0\leq R\leq 1/2.$

\subsubsection*{$\F_{q^m}$-linear self-orthogonal rank metric codes}

Pick $\F_{q^m},$ the extension field of $\F_q$ with degree $m.$ The ring isomorphism between $\F_{q^m}$ and $\F_q^m$ through the use of a fixed $\F_q$-basis of $\F_{q^m}$ implies the possibility to identify a rank metric codes over $\F_q^{n\times m}$ as a collection of vectors over $\F_{q^m}^n.$ For any two vectors $x=(x_1,\cdots x_n)$ and $y=(y_1,\cdots,y_n)\in\F_{q^m}^n,$ we say that they are orthogonal to each other if $\langle x,y\rangle = \sum_{i=1}^n x_iy_i=0.$ 
The vector $x$ is then called a self-orthogonal vector if it is orthogonal to itself.
The definition of self-orthogonality can  be naturally extended to a set $\{v_1,\cdots, v_t\}$ where this set is self-orthogonal if $\langle v_i,v_j\rangle=0$ for any choices of $i$ and $j,$ $1\leq i, j\leq t.$ 

Clearly, this suggests an alternative way to define a dual $\mC^\perp$ of a rank metric code $\mC\in\F_{q^m}^n,$ namely the collection of codewords that are orthogonal to all codewords in $\mC.$ Analogous to the previous definition, we call an $\F_{q^m}$-linear rank metric code $\mC$ to be self-orthogonal given that $\mC\subseteq \mC^\perp.$ Consequentially, an $\F_{q^m}$-linear rank metric code $\mC$ can only be self-orthogonal if its $\F_{q^m}$-dimension does not exceed $\frac{n}{2}$ or it has rate $R\in(0,1/2).$ We will devote the remainder of this section to investigate when both definition of duals can coincide.

In order to do that, we first need to review some basic concepts of algebra. Recall that for the field $\F_{q^m},$ there are $n$ different $\F_q$-linear automorphisms which are the Frobenius automorphisms $x\mapsto x^{q^{i}}$ for $i=0,\cdots,n-1.$ Based on these automorphisms, we can then define the field trace $tr_{\F_{q^m}/\F_q}$ which sends $x$ to $\sum_{i=0}^{m-1} x^{q^i}.$ We also recall that for any given $\F_q$ basis $\mB=(\beta_1,\cdots, \beta_m)$ of $\F_{q^m},$ there always exists a dual bases $\mB^\ast : = (\beta_1^{\ast} , \cdots \beta_m^{\ast})$ satisfying $tr_{\F_{q^m}/\F_q}(\beta_i\beta_j^\ast)=\delta_{i,j}$ the Kronecker delta function. Now if $\beta_i$ coincides with $\beta_i^\ast$ for all $i,$ $\mB$ is called a self-dual basis. Note that the existence of a self-dual $\F_q$-basis of $\F_{q^m}$ is equivalent to the condition that $q$ is even or both $q$ and $m$ are odd~\cite{Gabriele}. We are ready to conduct our investigations.
\begin{lemma}
Let $q$ and $m$ be chosen such that $\F_{q^m}$ has a self dual basis $\mB=(\beta_1,\cdots, \beta_m).$ Furthermore, pick any two rank metric codes $\mC_1$ and $\mC_2$ over $\F_q^{n\times m}.$ Then 
$$Tr(\mC_1\mC_2^T)=\{Tr(XY^T): X\in \mC_1, Y\in \mC_2\}=\{0\}$$
if and only if 
$$\langle\mC_1,\mC_2\rangle =\{\langle x,y\rangle: x\in \mC_1, y\in\mC_2\}=\{0\}.$$

Note that we assume $\mC_i$ to be in its matrix representation for the earlier equation and vector representation for the latter.

\end{lemma}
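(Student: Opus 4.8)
The plan is to route both conditions through the field trace $tr:=tr_{\F_{q^m}/\F_q}$ and to exploit the defining relation $tr(\beta_i\beta_j)=\delta_{i,j}$ of the self-dual basis $\mB$. Throughout, I identify a matrix $X=(X_{ij})\in\F_q^{n\times m}$ with the vector $x=(x_1,\dots,x_n)\in\F_{q^m}^n$ given by $x_i=\sum_{j=1}^m X_{ij}\beta_j$, and likewise $Y\leftrightarrow y$. The heart of the proof is the identity
$$tr\bigl(\langle x,y\rangle\bigr)=Tr(XY^{T}),$$
valid for all matrices $X,Y$: expanding $\langle x,y\rangle=\sum_{i=1}^n x_iy_i=\sum_{i,j,k}X_{ij}Y_{ik}\,\beta_j\beta_k$, then applying $tr$ (which is $\F_q$-linear, and $X_{ij},Y_{ik}\in\F_q$) and using $tr(\beta_j\beta_k)=\delta_{j,k}$, every cross term vanishes and what remains is $\sum_{i,j}X_{ij}Y_{ij}=Tr(XY^{T})$. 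This single computation is the only place where self-duality of $\mB$ is used.

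Granting the identity, one implication is immediate: if $\langle\mC_1,\mC_2\rangle=\{0\}$, then $Tr(XY^{T})=tr(0)=0$ for every $X\in\mC_1$, $Y\in\mC_2$, so $Tr(\mC_1\mC_2^{T})=\{0\}$. For the converse, suppose $Tr(\mC_1\mC_2^{T})=\{0\}$; the identity only yields $tr(\langle x,y\rangle)=0$, that is, $\langle x,y\rangle$ lies in the hyperplane $\ker(tr)$, which is not yet enough, so here I use that in their vector representation the codes are $\F_{q^m}$-linear, as in the rest of this subsection. For any $\gamma\in\F_{q^m}$ the vector $\gamma y$ still lies in $\mC_2$, and in matrix form it equals $YM_\gamma$, where $M_\gamma\in\F_q^{m\times m}$ is the multiplication-by-$\gamma$ matrix relative to $\mB$ (defined by $\gamma\beta_j=\sum_k(M_\gamma)_{jk}\beta_k$, so that $\gamma x\leftrightarrow XM_\gamma$). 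Applying the identity to the pair $(X,YM_\gamma)$ gives $tr\bigl(\gamma\langle x,y\rangle\bigr)=tr(\langle x,\gamma y\rangle)=Tr\bigl(X(YM_\gamma)^{T}\bigr)=0$ for every $\gamma\in\F_{q^m}$. Since the symmetric bilinear form $(\alpha,\gamma)\mapsto tr(\alpha\gamma)$ on $\F_{q^m}$ is non-degenerate, $tr(\gamma\langle x,y\rangle)=0$ for all $\gamma$ forces $\langle x,y\rangle=0$; as $X,Y$ were arbitrary, $\langle\mC_1,\mC_2\rangle=\{0\}$.

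The main obstacle is exactly this converse: the trace identity by itself cannot separate ``trace-orthogonal'' from ``$\F_{q^m}$-orthogonal'' — a single pair of matrices can satisfy $Tr(XY^{T})=0$ with $\langle x,y\rangle\neq0$ — and the remedy is to feed in the whole line $\{\gamma y:\gamma\in\F_{q^m}\}\subseteq\mC_2$ and then invoke non-degeneracy of the trace form. The only other point requiring care is the bookkeeping that $\gamma x$ corresponds to $XM_\gamma$ and that right multiplication by the $M_\gamma$ preserves $\mC_2$, which is just the standard dictionary translating $\F_{q^m}$-linearity from the vector picture into the matrix picture.
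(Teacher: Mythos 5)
Your proof is correct, and its forward direction --- the identity $tr(\langle x,y\rangle)=Tr(XY^{T})$ obtained by expanding both coordinates in the self-dual basis and using $tr(\beta_j\beta_k)=\delta_{j,k}$ --- is exactly the computation the paper performs. Where you genuinely diverge is the converse. The paper disposes of it in one sentence, passing directly from $Tr(AB^{T})=0$ to $\langle A,B\rangle=0$ with no argument; as you correctly observe, that implication fails for arbitrary sets of matrices, since the trace identity only places $\langle x,y\rangle$ in the kernel of the field trace, a proper $\F_q$-hyperplane of $\F_{q^m}$ containing nonzero elements (already a single pair $x=1$ and $y$ a nonzero trace-zero element of $\F_{q^m}$ violates the literal statement). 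Your remedy --- push the whole $\F_{q^m}$-line $\{\gamma y:\gamma\in\F_{q^m}\}$ through the identity via the multiplication matrices $M_\gamma$ and invoke non-degeneracy of the trace form $(\alpha,\gamma)\mapsto tr(\alpha\gamma)$ --- is the standard and correct way to close this, and it makes explicit that the converse requires $\F_{q^m}$-linearity of (at least one of) the codes, a hypothesis the lemma as stated omits but which holds in the only context where it is used. In short: same forward direction, but your converse is an actual proof where the paper's is an assertion, at the price of an extra linearity hypothesis that the paper should arguably have stated.
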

\begin{proof}
Let $\mathbf{a}=(a_1,\cdots,a_n)\in\mC_1$ and $\mathbf{b}=(b_1,\cdots b_n)\in\mC_2.$ We further let $a_i=\sum_{j=1}^m a_{i,j}\beta_j$ and $b_i=\sum_{k=1}^mb_{i,k} \beta_k$ for all $i=1,2\cdots, n.$ Assuming $\langle\mC_1,\mC_2\rangle=\{0\},$ we have 
\begin{eqnarray*} 
0=\langle\mathbf{a}, \mathbf{b}\rangle=\sum_{i=1}^na_{i}b_{i}&=&\sum_{i=1}^n\left(\sum_{j=1}^ma_{i,j}\beta_{j}\right)\left(\sum_{k=1}^mb_{i,k}\beta_{k}\right)=\sum_{i=1}^n\sum_{j=1}^m\sum_{k=1}^ma_{i,j}b_{i,k}\beta_{j}\beta_{k}.
\end{eqnarray*}

Applying the field trace function to both sides, we have 
\begin{align*}
0=tr\langle\mathbf{a}, \mathbf{b}\rangle=\sum_{i=1}^n\sum_{j=1}^m\sum_{k=1}^ma_{i,j}b_{i,k}tr(\beta_{j}\beta_{k})=\sum_{i=1}^n\sum_{j=1}^m\sum_{k=1}^ma_{i,j}b_{i,k}=\sum_{i=1}^n\sum_{j=1}^ma_{i,j}b_{i,j}.
\end{align*}

Note that by assumption, the matrices $A=(a_{i,j}), B=(b_{i,j})$ with size $n\times m$ are the matrix representations of $\mathbf{a}$ and $\mathbf{b}$ respectively. Noting that $Tr(AB^T)=\sum_{i=1}^n\sum_{j=1}^m a_{i,j}b_{i,j},$ the last equality implies that $Tr(AB^T)=0.$ Since $\mathbf{a}$ and $\mathbf{b}$ are arbitrary elements of $\mC_1$ and $\mC_2$ respectively, we have the conclusion $Tr(\mC_1\mC_2^T)=\{0\}$ from the assumption that $\langle\mC_1,\mC_2\rangle=\{0\}.$ 

On the other direction can be easily proved. Suppose that $Tr(AB^T)=0$ for all $A=(a_{i,j})\in\mC_1$ and $B=(b_{i,j})\in\mC_2.$
Hence, we have $B\in\mC_1^\perp$ or $\langle A,B\rangle=0$ for all $A\in\mC_1$ and $B\in\mC_2$ which ultimately implies that $\langle\mC_1,\mC_2\rangle=\{0\}.$
\end{proof}



\subsection{List Decoding}
Given a received word, list decoding outputs a list of codewords, if successful, it contains the correct transmitted codeword.
Analogous to the Hamming ball in classical block codes, in rank metric codes, we also have the concept of rank metric ball.
The formal definition is given in the following.
\begin{defn} Let $\tau\in(0,1)$ and $X\in \F_{q}^{n\times m}$. The rank metric ball centre at $X$ and radius $\tau n$ is defined by 
\begin{equation*}
\mB_{R}(X,\tau n):=\{Y\in \F_{q^m}^n:d_R(X,Y)\leq \tau n\}.
\end{equation*}
\end{defn}

For any $n$-dimensional vector space $V$ over $\F_q,$ we denote by ${n \brack k}_q,$ the number of subspaces of $V$ with dimension $k.$ This is called the Gaussian binomial coefficient and it has the following explicit formula,
\[
{n\brack k}_{q}=\prod_{i=0}^{k-1}\frac{q^{n}-q^{i}}{q^{k}-q^{i}}.
\]
It can be verified that this formula ${n\brack k}_q$ has the following bounds that can be used as estimation~\cite{Gad2008},
\begin{equation*}
q^{k(n-k)}\leq{n\brack k}_{q}\leq 4 q^{k(n-k)}.
\end{equation*}
\begin{defn} For an integer $L\geq 1$ and a real $\tau\in (0,1)$,
a rank metric code $\mC$ is said to be $(\tau n, L)$-list decodable if for every $X\in \F_q^{n\times m}$
\begin{equation*}
|\mB_R(X,\tau n)\cap \mC|\leq L.
\end{equation*}
Analogously, for  $x\in\F_{q^m}^n,$ 
\begin{equation*}
|\mB_R(x,\tau n)\cap \mC|\leq L.
\end{equation*}
\end{defn}

\subsection{Quadratic forms}
We say $f(x)$ is an $n$-variate quadratic form over $\F_q,$ if it is a degree $2$ homogeneous multinomial of $n$ variables with coefficients from $\F_q.$ The general formula that $f$ should follow 
\begin{equation*}
f(x)=f(x_1,x_2,\cdots,x_n)=\sum_{i,j=1}^na_{ij}x_ix_j,a_{ij}\in\mathbb{F}_q.
\end{equation*}

Note that an $n$-variate quadratic form $f$ over $\F_q$ can be expressed as multiplication of matrices. Assuming $x=(x_1,\cdots, x_n)^T$ and $A=(a_{i,j})_{i,j=1,\cdots, n}$ over $\F_q,$ $f(x)$ can then be rewritten as $f(x) = x^T Ax.$ 

Two quadratic forms $f$ and $g$ of $n_1$ and $n_2$ indeterminates respectively are called equivalent provided that we can find a full rank $n_1\times n_2$ matrix $M$ over $\F_q$ satisfying $f(xM) = g(x).$ Note that equivalence implies the same number of roots.

Equivalence enables two quadratic forms of different number of indeterminates to be closely related to each other. Given a non-zero quadratic form $f(x),$ the smallest number of indeterminates that a quadratic form $g(x)$ can have while still being equivalent to $f(x)$ is a parameter of $f(x)$ that is called to be its rank. By convention, we let the zero quadratic form be rank $0.$ A non-zero quadratic form $f(x)$ is said to be non-degenerate if its rank is equal to the number of its indeterminates.


To aid our analysis in this paper, the number of roots a quadratic form $f(x)$ is a topic of interest. We combine several results in~\cite{finitefields} as a lemma to consider two cases (over $\F_q$ and $\F_{q^m}$).
\begin{lemma}~~\cite{finitefields}~\label{lemma:4.1}
 Let $f(x):=f(x_1,x_2,\cdots,x_n)$ be a quadratic form with rank $r$ over $\mathbb{F}_{q}$ (or $\F_{q^m}$). $N(f(x)=0)$ denotes  the number of roots of $f(x)=0$ in $\mathbb{F}_{q}^{n\times m}$ (or $\F_{q^m}^n$). If $r=0,$ then we have $N(f(x)=0)=q^{mn}.$ If $r\geq 1,$ then we have the following results:

If $f(x)$ is defined over $\F_q$ with solutions in $\F_q^{n\times m},$ 
\begin{eqnarray*}
N(f(x)=0)=
\begin{cases}
q^{mn-1}, &r~ is~ odd.\cr q^{mn-1}\pm(q-1)q^{mn-r/2-1}, &r~ is~ even.
\end{cases}
\end{eqnarray*}
Alternatively, if $f(x)$ is defined over $\F_{q^m}$ with solutions in $\F_{q^m}^n,$ 
\begin{eqnarray*}
N(f(x)=0)=
\begin{cases}
q^{m(n-1)}, &r~ is~ odd.\cr q^{m(n-1)}\pm(q^m-1)q^{m(n-r/2-1)}, &r~ is~ even.
\end{cases}
\end{eqnarray*}

\end{lemma}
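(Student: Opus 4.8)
\emph{Proof plan.} The statement is the classical count of the zeros of a quadratic form over a finite field, and the plan is to recover it by combining the structure theory of quadratic forms with an elementary block-by-block count (this is how the cited results of~\cite{finitefields} are proved). It is convenient to prove both displayed cases at once. Write $\mathbb{F}$ for the ground field ($\F_q$ in the matrix case, $\F_{q^m}$ in the vector case), set $Q=|\mathbb{F}|$, and let $N$ be the number of $\mathbb{F}$-coordinates of the solution space, so that the solution space is $\mathbb{F}^{N}$; thus $(N,Q)=(nm,q)$ when the roots are taken in $\F_q^{n\times m}$, and $(N,Q)=(n,q^m)$ when they are taken in $\F_{q^m}^n$. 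In both cases the claim becomes the single statement that $N(f(x)=0)=Q^{N}$ when $r=0$ (immediate, since then $f$ is the zero form), and for $r\ge 1$,
\begin{equation*}
N(f(x)=0)=\begin{cases} Q^{N-1}, & r \text{ odd},\\ Q^{N-1}\pm(Q-1)\,Q^{\,N-r/2-1}, & r \text{ even}.\end{cases}
\end{equation*}

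First I would reduce to a non-degenerate form. By the definition of rank, $f$ is equivalent to a non-degenerate quadratic form $g$ in $r$ indeterminates; concretely, an invertible $\mathbb{F}$-linear substitution in the $N$ variables turns $f$ into a form depending only on the first $r$ of them and equal there to $g$. Such a substitution is a bijection of $\mathbb{F}^{N}$ and hence preserves the number of roots, while the other $N-r$ variables are free, so
\begin{equation*}
N(f(x)=0)=Q^{\,N-r}\cdot Z(g),\qquad Z(g):=\#\{\,y\in\mathbb{F}^{r}:g(y)=0\,\}.
\end{equation*}
It therefore suffices to show $Z(g)=Q^{r-1}$ for $r$ odd and $Z(g)=Q^{r-1}\pm(Q-1)Q^{r/2-1}$ for $r$ even, since multiplying by $Q^{N-r}$ then produces exactly the exponents $N-1$ and $N-r/2-1$.

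Next I would put $g$ in a normal form. When $\chr\mathbb{F}\neq 2$, I would diagonalise $g\cong a_1z_1^2+\cdots+a_rz_r^2$ and group the squares into hyperbolic planes, reducing $g$ to $z_1z_2+z_3z_4+\cdots+z_{2s-1}z_{2s}$ together with a tail of length $0$ or $1$: the tail is $a\,z_r^2$ if $r=2s+1$, it is empty if $r=2s$ and $g$ is of hyperbolic type, and it is an anisotropic binary piece $z_{r-1}^2-\alpha z_r^2$ (with $\alpha$ a fixed non-square) if $r=2s$ and $g$ is of non-hyperbolic type. In characteristic $2$ diagonalisation is unavailable, but the parallel normal form, with hyperbolic blocks $z_iz_{i+1}$ and a degree-$2$ Arf-invariant tail, holds and produces the same counts, so I would run that case alongside. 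The zero counts of $ww'=b$ and of the short tail are elementary (for instance $\#\{ww'=0\}=2Q-1$), and assembling the blocks by the convolution $N(h_1+h_2=0)=\sum_{b\in\mathbb{F}}N(h_1=b)\,N(h_2=-b)$, or equivalently by induction on $s$, gives $Z(g)=Q^{r-1}$ in the odd case and $Z(g)=Q^{r-1}+\varepsilon(Q-1)Q^{r/2-1}$ in the even case, with $\varepsilon=\pm1$ determined by the type of $g$ (namely $\varepsilon$ is the quadratic character of $(-1)^{r/2}\det g$). Substituting back into the previous display finishes the proof; a Gauss-sum evaluation of $Z(g)$ is an alternative to the inductive count.

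I do not expect a genuine obstacle: the content is entirely classical. The two points needing care are (i) the characteristic-$2$ case, where diagonalisation fails and one must use the Arf-invariant normal form to keep the argument uniform, and (ii) the sign bookkeeping in the even case, i.e.\ checking that the $\pm$ coming out of the block-by-block count is intrinsic to $g$ (its discriminant type) and independent of the chosen normal form. Both are handled in~\cite{finitefields}, which is why the lemma is quoted from there.
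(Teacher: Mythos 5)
The paper states this lemma without proof, citing Lidl--Niederreiter; your sketch correctly reconstructs the standard argument from that source (reduce to a non-degenerate form in $r$ variables picking up a factor $Q^{N-r}$ from the free coordinates, pass to the hyperbolic-plane normal form, and count block by block via the convolution identity), and your unified bookkeeping with $(N,Q)=(nm,q)$ versus $(n,q^m)$ reproduces both displayed cases exactly. The only points of care are the two you already flag --- the characteristic-$2$ normal form and the intrinsic determination of the sign $\pm$ --- both of which are handled in the cited reference, so there is no gap.
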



\section{Construction of Random Self-Orthogonal Rank Metric Codes}
\subsection{Construct $\F_q$-linear Self-Orthogonal Rank Metric Codes}
In this part, we construct $\F_q$-linear self-orthogonal rank metric codes based on quadratic forms.

Let $A=[a_{ij}], 1\leq i\leq n, 1\leq j\leq m$ be a word. If $A$ is self-orthogonal, then 
\begin{equation*}
Tr(AA^T)=\sum_{i=1}^{n}\sum_{j=1}^{m}a_{ij}a_{ij}=0.
\end{equation*}
Considering the standard bijection from $[n]\times [m]$ to $[nm],$ where $[n]=\{0,\cdots, n-1\}, [m]=\{0,\cdots, m-1\},$ we can rewrite the double index $(i,j)$ to a single index to obtain
 \begin{eqnarray*}
 Tr(AA^T)&=&\sum_{i=1}^{n}\sum_{j=1}^{m}a_{ij}a_{ij}=\sum_{\ell=1}^{mn}a_\ell^2=0.
 \end{eqnarray*}
 {\bf Construction}
 \begin{itemize}
 \item{Step $1.$ Choose a nonzero random solution $A_1\in\mathbb{F}_{q}^{n\times m}$ of the quadratic equation $x_1^2+x_2^2+\cdots+x_{mn}^2=0.$} 
 
By Lemma~\ref{lemma:4.1} we can obtain that the above equation has at least $q^{mn-2}$ solutions, so a self-orthogonal word $A_1$ can be found.


\item {Step $2.$ Obtain a linearly independent set $\{A_1,A_2,\cdots,A_{k-1},A_k\}$ of random self-orthogonal matrices given $A_1,\cdots,A_{k-1}.$}

{Firstly, we assume that a linearly independent set $\{A_1, A_2,\cdots, A_{k-1}\}$ of random self-orthogonal matrices has already been found, i.e. $Tr(A_iA_j^T)=\sum_{\ell=1}^{mn}a_{i_\ell}a_{j_\ell}=0, 1\leq i,j \leq k-1$. Then, if we want to find the $k$-th matrix $A_k,$ then we need to find a solution of the following equations
\begin{equation}
\left\{
\begin{aligned}~\label{eq:4.1}
a_{11}x_1+a_{12}x_2+\cdots+a_{1,mn}x_{mn}&=&0, \\
\vdots \\
a_{{k-1},1}x_1+a_{{k-1},2}x_2+\cdots+a_{{k-1},mn}x_{mn}&=&0,\\
x_1^2+x_2^2+\cdots+x_{mn}^2&=&0.
\end{aligned}
\right.
\end{equation}}
 \end{itemize}
Take the first $k-1$ equations above in to the last one, we have a quadratic equation $g(x_{i_1}, x_{i_2},\cdots, x_{i_{mn-k+1}})$ of $mn-k+1$ variables. So, $N(g(x_{i_1}, x_{i_2},\cdots, x_{i_{mn-k+1}})=0)$ is the number of solutions of the equation~(\ref{eq:4.1}). The number of  the cardinality of $\rm{span}\{A_1, A_2, \cdots, A_{k-1}\}$ is equal to $q^{mn(k-1)}.$ And, $N(g(x_{i_1}, x_{i_2},\cdots, x_{i_{mn-k+1}})=0)>q^{mn(k-1)},$ thus we can randomly choose a solution $A_k$ from~(\ref{eq:4.1}), it is not contained in $\rm{Span}\{A_1, A_2, \cdots, A_{k-1}\}.$ 

So, we can obtain a linearly independent set $\{A_1, A_2, \cdots, A_{k-1}, A_k\}$ of random self-orthogonal matrices. 

Moreover, by Lemma~\ref{lemma:4.1}, the number of solution $N(g(x_{i_1}, x_{i_2},\cdots, x_{i_{mn-k+1}})=0)$ of $g(x_{i_1}, x_{i_2},\cdots, x_{i_{mn-k+1}})=0$ is at least $q^{mn-k-1}.$ Thus, {the set can always be constructed as long as} $k\leq(mn-1)/2.$
\subsection{Construct $\F_{q^m}$-linear Self-Orthogonal Rank Metric Codes}
We study how to construct $\F_{q^m}$-linear self-orthogonal rank metric codes. The idea is similar to the construction of $\F_{q}$-linear self-orthogonal rank metric codes. Constructing a random $\F_{q^m}$-linear self-orthogonal rank metric code is equivalent to find a linearly independent set $\{x_1, x_2,\cdots, x_k\}$ of random $\F_{q^m}$-linear self-orthogonal vectors, where $x_i\in\F_{q^m}, 1\leq i\leq k.$ 

Choose a nonzero random solution $x_1=(x_{11}, x_{12},\cdots, x_{1n})\in\mathbb{F}_{q^m}^n$ of the quadratic equation $z_1^2+z_2^2+\cdots+z_n^2=0.$ This equation has at least $q^{m(n-2)}$ roots, so a self-orthogonal $x_1$ can be found. The same method as the construction $\F_q$-linear self-orthogonal rank metric codes can be conducted. Then, we can confirm there exists a linearly independent set $\{x_1, x_2, \cdots, x_{k-1}, x_k\}$ of $\F_{q^m}$-linear self-orthogonal vectors.  In addition, by calculation we have such $x_k$ as long as $k\leq(n-1)/2.$

\section{List Decoding Self-Orthogonal Rank Metric Codes}

\subsection{List Decoding $\F_{q}$-linear Self-Orthogonal Rank Metric Codes}
In this part, we investigate the list decodability of $\F_{q}$-linear self-orthogonal rank metric codes. We show that rate and decoding radius of $\F_{q}$-linear self-orthogonal rank metric codes can achieve the Gilbert-Varshamov bound. From now on, the information rate $\frac{log_q|\mC|}{mn}$ and the ratio $\frac{n}{m}$ are denoted by $R$ and $\rho,$ respectively.

Our main result of list decoding of $\F_q$-linear self-orthogonal rank metric codes can be found in the Theorem~\ref{thm:4.6}. With the help of studying and discussing the weight distribution of certain rank metric code, we can deal with it.


\begin{lemma}~\cite{venkat2017}\label{lemma:4.7}
For all integers $n\leq m$, every $\tau\in(0,1)$ and $\ell=O(\sqrt{nm}),$ there exists a constant $C_{\tau,q}>1$ such that if $X_1,\cdots, X_\ell$ are selected independently and uniformly at random from $B_R(0,\tau n),$ then we have
\begin{equation*}
Pr[\mid \rm{span}\{X_1,\cdots, X_\ell\}\cup B_R(0,\tau n)\mid\geq C_{\tau,q}\cdot \ell]\leq q^{-(3-O(1))mn}
\end{equation*}
\end{lemma}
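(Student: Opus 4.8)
The plan is to bound $\Pr[\mathcal E]$, where $\mathcal E$ is the event $|\spn\{X_1,\dots,X_\ell\}\cap B_R(0,\tau n)|\ge C_{\tau,q}\ell$, by extracting from any such large intersection a bounded-size ``generic'' witness and union-bounding over it. Write $B=B_R(0,\tau n)$ and $W=\spn\{X_1,\dots,X_\ell\}$. Conditioning on a choice of a minimal spanning subset of the $X_i$ (at most $2^\ell=q^{o(mn)}$ choices) we may assume the $X_i$ are $\F_q$-linearly independent, so the nonzero elements of $W$ are precisely the sums $\sum_i c_iX_i$, $c\in\F_q^\ell\setminus\{0\}$, all distinct; the $(q-1)\ell$ vectors $c$ with single-element support account for the scalar multiples of the $X_i$, so on $\mathcal E$ at least $(C_{\tau,q}-q)\ell$ vectors $c$ with $|\mathrm{supp}(c)|\ge2$ satisfy $\sum_i c_iX_i\in B$. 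Grouping such $c$ by their largest support index, some index $i$ carries at least $C_{\tau,q}-q$ of them; normalizing the $i$-th coordinate to $1$ and using that any $q^{p-1}+1$ distinct vectors contain $p$ that are $\F_q$-independent modulo $e_i$, we see that, once $C_{\tau,q}$ is chosen large enough (in terms of $\tau,q$) relative to a constant $p=p(\tau,q)$ fixed below, $\mathcal E$ forces the existence of an index $i$ and of $c^{(1)},\dots,c^{(p)}$, each with $i$-th coordinate $1$ and independent modulo $e_i$, with $\sum_j c^{(a)}_jX_j\in B$ for all $a$.

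I would then union-bound over the (at most $\ell$) choices of $i$ and the (at most $q^{\ell p}=q^{o(mn)}$) choices of the tuple $(c^{(a)})$, so that
\[
\Pr[\mathcal E]\ \le\ q^{o(mn)}\cdot\max_{i,\,(c^{(a)})}\ \Pr\!\Big[\,\textstyle\sum_j c^{(a)}_jX_j\in B,\ a=1,\dots,p\,\Big].
\]
To bound one such probability, reveal $X_1,\dots,X_{i-1}$; the $a$-th condition becomes $X_i\in B-u_a$ with $u_a=\sum_{j<i}c^{(a)}_jX_j$, and because the $c^{(a)}$ are independent modulo $e_i$ the $u_a$ are $\F_q$-combinations of $X_1,\dots,X_{i-1}$ whose coefficient vectors (the restrictions of the $c^{(a)}$ to coordinates $\ne i$) are linearly independent; hence every nonzero $\F_q$-combination of the $u_a$ is a nonzero combination of the $X_j$. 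A first-moment bound on the rank of a nonzero combination of independent uniform elements of $B$ — it lies in $B_R(0,\delta_0 n)$ with probability at most $|B_R(0,\delta_0 n)|/|B|$ — then shows that, up to an exceptional event of small probability, all these combinations have rank at least $\delta_0 n$, so that the shifts $u_1,\dots,u_p$ are ``jointly generic.'' Conditioned on this, the conditional probability over $X_i\sim B$ is governed by the dispersion estimate below, and taking $p$ large makes it, and hence $\Pr[\mathcal E]$, as small as required.

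The key input is an anticoncentration/dispersion estimate for the rank-metric ball: for jointly generic shifts — say $v_1,\dots,v_k$ with $\rank(\sum_a b_av_a)\ge\delta_0 n$ for every nonzero $(b_a)\in\F_q^k$, for a fixed threshold $\delta_0=\delta_0(\tau)<\tau$ — one has
\[
\big|\,B\cap(B+v_1)\cap\dots\cap(B+v_k)\,\big|\ \le\ q^{-\lambda k\,mn}\,|B|,\qquad \lambda=\lambda(\tau,q,\delta_0)>0.
\]
Such bounds are proved by counting, via the Gaussian-binomial estimates $q^{k(n-k)}\le{n\brack k}_q\le4q^{k(n-k)}$ and the root formula of Lemma~\ref{lemma:4.1}, how the number of $Z$ with $\rank(Z),\rank(Z+v_1),\dots,\rank(Z+v_k)$ all at most $\tau n$ decays as the ranks of the independent shifts grow. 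The essential point — and the reason the large-rank, joint-genericity hypothesis cannot be dropped — is that, in sharp contrast with the Hamming ball, the rank ball carries genuine additive structure: a sum of two matrices of rank at most $\tfrac12\tau n$ has rank at most $\tau n$, so $|B\cap(B+v)|$ stays close to $|B|$ for shifts $v$ of small rank.

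The step I expect to be the main obstacle is precisely this dispersion estimate: identifying the right joint-genericity hypothesis on the shifts — weak enough to hold with high probability for the random combinations $u_a$ arising above, yet strong enough to force the $q^{-\lambda k\,mn}$ decay — and carrying out the Lemma~\ref{lemma:4.1}/Gaussian-binomial count that establishes it in quantitatively sharp form, which is what is needed to push the final exponent up to the value stated. Everything else — the extraction of the generic witness, the union bound, and the reduction to the dispersion estimate — is routine bookkeeping once this estimate is in hand.
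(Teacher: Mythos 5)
First, note that the paper does not actually prove this lemma: it is quoted verbatim (with the typo $\cup$ for $\cap$, which you correctly read as an intersection) from~\cite{venkat2017}, so there is no in-paper proof to compare against. Judged on its own, your blueprint --- reduce to independent $X_i$, extract a bounded witness of coefficient vectors sharing a top support index, union-bound over the $q^{O(p\ell)}=q^{o(mn)}$ witnesses, and reduce to an anticoncentration estimate for $B\cap(B-u_1)\cap\dots\cap(B-u_p)$ --- is the right general shape, and is essentially the Guruswami--H{\aa}stad--Kopparty strategy that Guruswami--Resch adapt to the rank metric. But the proposal leaves its central ingredient, the dispersion estimate, unproven (you say so yourself), and moreover states it in a form that is false: if $\delta_0<\tau$ and the shifts $v_a$ all have rank in $[\delta_0 n,\tau n]$, then $0\in B\cap\bigcap_a(B+v_a)$, so a bound of the form $q^{-\lambda k\,mn}|B|$ with $\lambda$ independent of $k$ cannot hold once $k>(\tau+\tau\rho-\tau^2\rho)/\lambda$. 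It must be restricted to $k$ below a threshold (which would suffice for your application) or the decay must saturate; as written the ``key input'' is not a correct statement to aim for.

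The more serious gap is quantitative, and it sits in the step you call routine bookkeeping. Your per-witness bound is $\min$ of the dispersion bound and the probability that joint genericity of the $u_a$ fails. Among the combinations $\sum_a b_a u_a$ there are, for typical witnesses, single-variable combinations $b_jX_j$, and
\[
\Pr\bigl[\rank(X_j)<\delta_0 n\bigr]=\frac{|B_R(0,\delta_0 n)|}{|B_R(0,\tau n)|}\ \geq\ q^{-mn(\tau+\tau\rho-\tau^2\rho)}\ \geq\ q^{-\frac{5}{4}mn},
\]
since $\tau+\rho\tau(1-\tau)\leq 1+\rho/4\leq 5/4$. Conditioned on such a low-rank $X_j$, the events $X_i\in B-u_a$ genuinely can all hold with constant probability (the shifted balls then overlap $B$ heavily), so for those witnesses your bound degrades to the probability of the conditioning event itself. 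No choice of $p$ or $\delta_0$ repairs this: the final bound your argument yields is at best $q^{-(5/4-o(1))mn}$, which falls short not only of the stated $q^{-(3-O(1))mn}$ but even of the $q^{-2mn}$ that the proof of Theorem~\ref{thm:4.6} actually extracts from this lemma. Closing this gap is precisely what the recursive ``increasing chains'' argument of Guruswami--H{\aa}stad--Kopparty (and its rank-metric adaptation in~\cite{venkat2017}) is for: rather than discarding the low-rank event as an additive error term, one recurses on it so that every branch of the case analysis contributes an exponent that accumulates. So what is missing is not just the computation behind the dispersion estimate but the structural idea that pushes the exponent up to the stated value.
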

From the above lemma, it reveals that randomly picking $\ell$ words from $B_R(0, \tau n),$ there exists more than $\Omega(\ell)$ words in the span of $\ell$ words lies in the $B_R(0, \tau n)$ happens with a very small probability, where the parameter $\ell$ depends on the list size $L.$

Then, we consider the following result on the probability that a random $k$ dimension $\F_q$-linear rank metric code contains a $k-1$ dimension $\F_q$-linear self-orthogonal subcode and a given set $\{X_1,\cdots,X_\ell\}\subseteq \F_q^{n\times m}$ of linearly independent vectors. Let $\mC^*_k$ present the set of $k$ dimension $\F_q$-linear rank metric codes, where every code contains a $k-1$ dimension $\F_q$-linear self-orthogonal subcode.

\begin{lemma}~\cite{L.F.J2015}\label{lemma:4.8}
For any $\mathbb{F}_{q}$-linearly independent words $X_1, X_2,\cdots, X_\ell$ in $\F_{q}^{n\times m}$ with $\ell\leq k <mn/2,$ the probability of a random code $\mathcal{C}^*$ from $\mathcal{C}^*_k$ contains $\{X_1,X_2,\cdots, X_\ell\}$ is
\begin{equation}
Pr_{\mathcal{C}^*\in \mathcal{C}^*_k}[\{X_1,X_2,\cdots, X_\ell\}\subseteq \mathcal{C}]\leq\left\{
\begin{aligned}
{q}^{((k+\ell-mn-1)\ell+2k-1)}, ~~if~~ q~~ is~~ even;\\
{q}^{((k+\ell-mn-2)\ell+4k-2)}, ~~if~~ q~~ is~~ odd.
\end{aligned}
\right.
\end{equation}
Thus, we have 
\begin{equation}
Pr_{\mathcal{C}^*\in \mathcal{C}^*_k}[\{X_1,X_2,\cdots, X_\ell\}\subseteq \mathcal{C}]\leq {q}^{((k+\ell-mn-2)\ell+4k-1)}.
\end{equation}
\end{lemma}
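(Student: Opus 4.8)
The goal is to bound the probability that a random code $\mathcal{C}^*\in\mathcal{C}^*_k$ — i.e.\ a $k$-dimensional $\F_q$-linear rank metric code containing a $(k-1)$-dimensional self-orthogonal subcode — contains a prescribed set of $\ell$ linearly independent words $X_1,\dots,X_\ell$. The plan is to generate a uniformly random element of $\mathcal{C}^*_k$ by the natural two-stage sampling process suggested by the Construction section: first pick a random $(k-1)$-dimensional self-orthogonal code $\mathcal{D}$ (built via the iterative quadratic-form procedure, whose step counts are controlled by Lemma~\ref{lemma:4.1}), then extend $\mathcal{D}$ to a $k$-dimensional code $\mathcal{C}^*$ by adjoining one more basis vector chosen uniformly from $\F_q^{n\times m}\setminus\mathcal{D}$. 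I would then split the event $\{X_1,\dots,X_\ell\}\subseteq\mathcal{C}^*$ according to the dimension $t$ of the intersection $\spn\{X_1,\dots,X_\ell\}\cap\mathcal{D}$; since $\mathcal{C}^*/\mathcal{D}$ is one-dimensional, $t$ is either $\ell$ or $\ell-1$. In the first case all $X_i$ lie in the self-orthogonal part, in the second case $\ell-1$ of them (after a change of basis) lie in $\mathcal{D}$ and the last one pins down the extension line.

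The two core estimates are: (a) the probability that a fixed $s$-dimensional $\F_q$-subspace is self-orthogonal and is contained in a random self-orthogonal $(k-1)$-dimensional code, and (b) the probability that a fixed vector outside $\mathcal{D}$ lies on the randomly chosen extension line, which is of order $q^{-(mn-(k-1))}=q^{-(mn-k+1)}$. For (a) I would count: the number of self-orthogonal $(k-1)$-dimensional codes containing a given self-orthogonal $s$-space behaves (using Lemma~\ref{lemma:4.1} to count solutions of the successive quadratic systems, exactly as in the Construction) like $q^{(k-1-s)(mn-1) - \binom{k-1-s}{2}(\dots)}$ up to bounded factors, divided by the total count $|\mathcal{C}^*_k|$, and the parity of $q$ (equivalently, existence of a self-dual basis and the $\pm$ term in Lemma~\ref{lemma:4.1}) is what produces the two separate bounds with exponents $(k+\ell-mn-1)\ell+2k-1$ versus $(k+\ell-mn-2)\ell+4k-2$. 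Combining the $t=\ell$ and $t=\ell-1$ contributions, taking the larger exponent, and absorbing the crossover costs into the additive $O(k)$ slack gives the stated bounds; the final displayed inequality follows by noting $4k-1\ge\max\{2k-1,4k-2\}$ and $(k+\ell-mn-2)\le(k+\ell-mn-1)$ when $\ell\ge 0$, so a single clean bound $q^{(k+\ell-mn-2)\ell+4k-1}$ dominates both cases.

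The main obstacle I anticipate is step (a): getting the self-orthogonal subcode count right with the correct dependence on the parity of $q$ and on whether the fixed subspace we are conditioning on is itself self-orthogonal (it must be, or the probability is zero). This is where Lemma~\ref{lemma:4.1} enters essentially — each time we extend a self-orthogonal code by one dimension we are counting roots of a quadratic form in $mn - (\text{current dimension})$ variables, and the even-rank case carries the $\pm(q-1)q^{\cdots}$ correction; tracking how these corrections multiply over $k-1-s$ successive steps, and bounding the resulting product cleanly from below (to upper-bound the reciprocal), is the delicate part. A secondary nuisance is checking that the restriction $\ell\le k<mn/2$ is exactly what keeps every quadratic form encountered of rank large enough for Lemma~\ref{lemma:4.1} to give a nontrivial lower bound, so that all the conditional probabilities are well defined. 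Once the subcode count is pinned down, the rest is bookkeeping with Gaussian-binomial estimates $q^{k(n-k)}\le{n\brack k}_q\le 4q^{k(n-k)}$ and the union over the two values of $t$.
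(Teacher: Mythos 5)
First, a point of reference: the paper does not prove this lemma at all --- it is imported (with $n$ replaced by $mn$) from the cited work on random self-orthogonal codes, so there is no in-paper proof to compare your argument against. Judged on its own terms, your outline has the right skeleton: the two-stage sampling (a random $(k-1)$-dimensional self-orthogonal $\mathcal{D}$, then a uniform extension vector), the case split on $t=\dim(\spn\{X_1,\dots,X_\ell\}\cap\mathcal{D})\in\{\ell-1,\ell\}$, and the observation that the relevant intersection with $\mathcal{D}$ must itself be self-orthogonal. But it stops exactly where the lemma lives. The entire content of the stated bounds is the count of $(k-1)$-dimensional self-orthogonal codes containing a fixed self-orthogonal $s$-dimensional subspace, and you leave that as a placeholder ``$q^{(k-1-s)(mn-1)-\binom{k-1-s}{2}(\dots)}$ up to bounded factors.'' Without actually carrying out that count (via Witt's extension theorem / transitivity of the orthogonal group on totally isotropic subspaces of a fixed dimension --- which is also what is needed to legitimize your claim that the iterative sampler is uniform), the exponents $(k+\ell-mn-1)\ell+2k-1$ and $(k+\ell-mn-2)\ell+4k-2$ cannot be recovered; the proposal is a plan rather than a proof.

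Two specific further problems. (i) You attribute the even/odd $q$ dichotomy to the $\pm$ correction in Lemma~\ref{lemma:4.1} and to self-dual bases; the actual source is that in characteristic $2$ one has $x_1^2+\cdots+x_{mn}^2=(x_1+\cdots+x_{mn})^2$, so the diagonal condition $Tr(XX^T)=0$ degenerates to a single linear condition while the cross conditions $Tr(X_iX_j^T)=0$ remain bilinear; this changes the dimension count of isotropic extensions and is what produces the different exponent for even $q$. (ii) Your justification of the final unified bound is backwards: since $\ell\ge 0$, the inequality $(k+\ell-mn-2)\le(k+\ell-mn-1)$ makes the first term of the proposed unified exponent \emph{smaller} than the even-$q$ exponent's first term, which is the wrong direction for an upper bound. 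The correct comparison is $(k+\ell-mn-1)\ell+2k-1\le(k+\ell-mn-2)\ell+4k-1$ if and only if $\ell\le 2k$, which holds because $\ell\le k$; your conclusion is true but not for the reason given. Finally, a code in $\mathcal{C}^*_k$ may contain several $(k-1)$-dimensional self-orthogonal subcodes, so your two-stage sampler is uniform over $\mathcal{C}^*_k$ only if the random code is \emph{defined} by that sampler (as in the cited reference); this should be stated rather than assumed.
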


Based on the Lemma~\ref{lemma:4.7} and Lemma~\ref{lemma:4.8}, we prove Theorem~\ref{thm:4.6}.
\begin{theorem}~\label{thm:4.6}
Let $q$ be prime power and $\tau\in(0,1).$
There exist a constant $M$ and all large enough $n$, for small $\epsilon>0,$  an $\F_q$-linear self-orthogonal rank metric code $\mC\subseteq \F_q^{n\times m}$ of rate $R=(1-\tau)(1-\rho\tau)-\epsilon$ is ${(\tau n, O_{\tau, q}(\frac{1}{\epsilon}))}$-list decodable with high probability at least $1-q^{-2mn}$.
\end{theorem}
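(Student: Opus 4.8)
The plan is to combine a union bound over potential bad lists with the two probabilistic estimates already at our disposal, namely Lemma~\ref{lemma:4.7} (few of $\ell$ random words in $B_R(0,\tau n)$ have their span mostly inside the ball) and Lemma~\ref{lemma:4.8} (a random self-orthogonal code of dimension $k$ contains a fixed $\ell$-independent set with small probability). The target list size is $L = O_{\tau,q}(1/\epsilon)$, so following the standard approach we set $\ell = C_{\tau,q}\cdot L$ with $C_{\tau,q}$ the constant of Lemma~\ref{lemma:4.7}, and we pick the code dimension $k$ so that the rate is $R = k/(mn) = (1-\tau)(1-\rho\tau)-\epsilon$; by translation invariance it suffices to bound the number of codewords of $\mC$ in $B_R(0,\tau n)$.

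\textbf{Step 1 (reduction to an independent sublist).} Suppose $\mC$ fails to be $(\tau n, L)$-list decodable, so some ball $B_R(X,\tau n)$ contains more than $L$ codewords; after translating, $B_R(0,\tau n)$ contains codewords $Y_1,\dots,Y_{L+1}$. If more than $C_{\tau,q}$ of them could be chosen $\F_q$-linearly independent we would already contradict Lemma~\ref{lemma:4.7}, while if at most $\ell=C_{\tau,q}L$ of them span the rest, the span of those $\ell$ words would contain $> L \geq \ell/C_{\tau,q}$ elements of $B_R(0,\tau n)$, which is exactly the event Lemma~\ref{lemma:4.7} says is rare (probability $q^{-(3-o(1))mn}$). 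So with overwhelming probability, if list-decoding fails there must be a set of $\ell$ \emph{linearly independent} codewords all lying inside $B_R(0,\tau n)$.

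\textbf{Step 2 (union bound).} The number of $\ell$-element $\F_q$-linearly independent subsets of $B_R(0,\tau n)$ is at most $|B_R(0,\tau n)|^{\ell}$, and by the Gaussian-binomial bounds together with Lemma~\ref{lemma:4.1}-type volume estimates this is at most $q^{\tau(m+n-\tau n)n\ell}$ up to polynomial factors, i.e.\ $q^{(1-(1-\tau)(1-\rho\tau))mn\ell + o(mn\ell)}$. For each fixed such set, Lemma~\ref{lemma:4.8} bounds the probability it is contained in the random self-orthogonal code $\mC\in\mC^*_k$ by $q^{(k+\ell-mn-2)\ell+4k-1} \le q^{-(1-k/(mn))mn\ell + o(mn\ell)} = q^{-((1-\tau)(1-\rho\tau)+\epsilon)mn\ell + o(mn\ell)}$. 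Multiplying the count by the per-set probability gives failure probability at most $q^{-\epsilon mn\ell + o(mn\ell)}$, and choosing $\ell$ a large enough constant multiple of $1/\epsilon$ (absorbed into $O_{\tau,q}(1/\epsilon)$) makes this $\le q^{-2mn}$, which combined with the negligible $q^{-(3-o(1))mn}$ term from Step~1 yields the claimed $1-q^{-2mn}$.

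\textbf{The main obstacle} I expect is getting the bookkeeping in Step~2 to close with the correct rate $R=(1-\tau)(1-\rho\tau)-\epsilon$: one must verify that the $\ell$-th power of the ball volume and the exponent in Lemma~\ref{lemma:4.8} genuinely cancel up to the $\epsilon mn\ell$ slack, tracking the $4k-1$ and constant additive terms in Lemma~\ref{lemma:4.8} (which are $o(mn\ell)$ only because $\ell\to\infty$ slowly and $k = \Theta(mn)$) and the $4q^{k(n-k)}$-type polynomial factors from the Gaussian binomial bound. A secondary subtlety is ensuring $\ell = O_{\tau,q}(1/\epsilon)$ stays within the regime $\ell = O(\sqrt{nm})$ required by Lemma~\ref{lemma:4.7} and $\ell \le k < mn/2$ required by Lemma~\ref{lemma:4.8}; this is automatic for all large enough $n$ since $\epsilon$ is a fixed constant, but it should be stated. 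The self-orthogonality constraint itself does not create a new difficulty beyond what Lemma~\ref{lemma:4.8} already encapsulates — the whole point is that the inclusion probability for $\mC^*_k$ is only a bounded factor worse than for a fully random $k$-dimensional code, so the argument of~\cite{venkat2017} goes through essentially verbatim.
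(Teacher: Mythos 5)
Your overall architecture is the paper's: reduce to the ball at the origin, pass to a maximal linearly independent subset of the codewords in the ball, bound the number of such configurations using Lemma~\ref{lemma:4.7}, and union-bound against the containment probability of Lemma~\ref{lemma:4.8}. But there are two concrete gaps in how you assemble the pieces. First, your invocation of Lemma~\ref{lemma:4.7} in Step~1 has the threshold inequality reversed. You set $\ell = C_{\tau,q}L$ and argue that a span of $\ell$ ball elements containing more than $L = \ell/C_{\tau,q}$ ball elements is the rare event of Lemma~\ref{lemma:4.7}; but the lemma's rare event is a span containing at least $C_{\tau,q}\,\ell$ ball elements, and since $C_{\tau,q}>1$ we have $\ell/C_{\tau,q} < C_{\tau,q}\,\ell$, so your event is strictly weaker and the lemma gives no bound on its probability. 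The usable direction is $L \geq C_{\tau,q}\,\ell$, i.e.\ $\ell \leq L/C_{\tau,q}$. This is exactly why the paper takes $L=\lceil 5C_{\tau,q}/\epsilon\rceil$ and, instead of fixing one value of $\ell$, sums over all possible ranks $\ell\in[\lceil\log_q L\rceil, L]$ of a maximal independent subset of $B_R(0,\tau n)\cap\mC^*$: for $\ell<5/\epsilon$ (so that $L\geq C_{\tau,q}\,\ell$) Lemma~\ref{lemma:4.7} yields $|\mF_\ell|\leq |B_R(0,\tau n)|^\ell q^{-2mn}$, while for $\ell\geq 5/\epsilon$ the trivial count is used. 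A single $\ell$ cannot work, because the $L+1$ codewords in the ball may span a space of dimension as small as $\log_q L$, and that low-rank case is precisely the one your Step~1 fails to control.

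Second, in Step~2 you declare the additive term $4k-1$ of Lemma~\ref{lemma:4.8} to be $o(mn\ell)$ ``because $\ell\to\infty$.'' That is false here: $\ell\leq L=O_{\tau,q}(1/\epsilon)$ is a constant independent of $n$, so $4k = 4(Rmn+1) = \Theta(mn)$ is of the same order as $mn\ell$ and cannot be absorbed into an $o(\cdot)$. It has to be beaten either by the extra $q^{-2mn}$ factor that Lemma~\ref{lemma:4.7} supplies in the small-$\ell$ regime, or, in the large-$\ell$ regime, by $\epsilon mn\ell\geq 5mn$ together with $4Rmn\leq 2mn$ (which is where the self-orthogonality constraint $R\leq 1/2$ is silently used). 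So the claimed cancellation $q^{-\epsilon mn\ell+o(mn\ell)}\leq q^{-2mn}$ does not close as written; the two-regime split is not optional bookkeeping but the substance of the proof. A smaller omission: ``translation invariance'' does not by itself reduce to the ball at $0$, since $\mC+X$ is a coset rather than a code; the paper passes to the $(Rmn+1)$-dimensional code $\mathrm{Span}_{\F_q}(\mC,X)\in\mC^*_{Rmn+1}$, which is the reason Lemma~\ref{lemma:4.8} is stated for codes containing a self-orthogonal subcode of codimension one rather than for self-orthogonal codes themselves.
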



\begin{proof}
Pick $M=5C_{\tau,q},$ where $C_{\tau,q}$ is the constant in Lemma~\ref{lemma:4.7}.
Set $L=\lceil{\frac{M}{\epsilon}}\rceil$ and $n$ to be large enough.

Let $\mC$ be an $\F_{q}$-linear self-orthogonal rank metric codes with $\dim_{\F_{q}}=Rmn$ in $\F_{q}^{n\times m}$, the size $|\mC|=q^{Rmn}.$ We want to show that with high probability, $\mC$ is  ${(\tau, O_{\tau,q}(\frac{1}{\epsilon}))}$-list decodable. In other words, the code $\mC$ is not  ${(\tau n,  O_{\tau,q}(\frac{1}{\epsilon}))}$-list decodable with low probability, i.e.,
\begin{equation}~\label{eq:4.2}
Pr_{\mC\in\mC_{Rmn}}[\exists X\in\F_{q}^{n\times m},~|B_R(X,\tau n)\cap \mC|\geq L]< q^{-2mn},
\end{equation}
where $\mC_{Rmn}$ denotes the set of $\F_{q}$-linear self-orthogonal rank metric codes with dimension $Rmn.$

Let $X\in\F_{q}^{n\times m}$ be picked uniformly at random, define
\begin{equation*}
\triangle:=Pr_{{\mC\in\mC_{Rmn}}, X\in\F_{q^{n\times m}}}[|B_R(X,\tau n)\cap\mC|\geq L]
\end{equation*}
For~(\ref{eq:4.2}), it suffices to prove
\begin{equation}~\label{eq:4.3}
\triangle< q^{-2mn}\cdot q^{-(1-R)mn}
\end{equation}
The inequality~(\ref{eq:4.3}) is derived from~(\ref{eq:4.2}). For every $\F_{q}$-linear code $\mC$, due to "bad" case $X$ such that ($|B_R(X,\tau n)\cap\mC|\geq L$), there are $q^{Rmn}$ such ''bad'' $X.$


Since $\mC$ is $\F_q$-linear, we have 
\begin{align*}
\triangle=&Pr_{{\mC\in\mC_{Rmn}}, {X\in\F_{q}^{n\times m}}}[|B_R(X,\tau n)\cap\mC|\geq L]\\
=&Pr_{{\mC\in\mC_{Rmn}}, {X\in\F_{q}^{n\times m}}}[|B_R(0,\tau n)\cap(\mC+X)|\geq L]\\
\leq&Pr_{{\mC\in\mC_{Rmn}}, {X\in\F_{q}^{n\times m}}}[|B_R(0,\tau n)\cap \rm Span_{\F_{q}}(\mC+X)|\geq L]\\
\leq&Pr_{{\mC^*\in\mC^*_{Rmn+1}}}[|B_R(0,\tau n)\cap \mC^*|\geq L],
\end{align*}
where $\mC^*$ is a $Rmn+1$ dimension random $\F_{q}$-subspace of $\F_{q}^{n\times m}$ containing $\rm Span_{\F_{q}}(\mC+X)$ ( If $X$ is not in $\mC,$ then $\mC^*=\rm Span_{\F_{q}}(\mC, X);$ otherwise $\mC^*=\rm Span_{\F_{q}}(\mC, Y),$ where word $Y$ is  randomly picked from $\F_{q}^{n\times m}\backslash \mC$).

For each integer $\ell \in[log_{q} L, L],$ let $\mathcal{F}_\ell$ be the set of all tuples $(X_1, X_2, \cdots, X_\ell)\in B_R(0,\tau n)^\ell$ such that $X_1, X_2, \cdots, X_\ell$ are linearly independent and $|\rm span(X_1, \cdots, X_\ell)\cap B_R(0,\tau n)|\geq L.$

Let $\mathcal{F}=\bigcup_{\ell=\lceil{loq_{q}L} \rceil}^L \mathcal{F}_\ell.$ For each $X=(X_1,\cdots, X_\ell)\in\mathcal{F},$ let $\{X\}$ and $(X)$ denote the set $\{X_1,\cdots,X_\ell\}$ and the tuple $(X_1,\cdots, X_\ell),$ respectively.

Claim that if $|B_R(0,\tau n)\cap \mC^*|\geq L,$ there must exist $(X)\in\mathcal{F}$ such that $\{X\}\subseteq \mC^*.$ Indeed, let $\{H\}$ be a maximal linearly independent subset of $B_R(0,\tau n)\cap \mC^*.$ If $|\{H\}|<L,$ then we have $\{X\}=\{H\}$. Otherwise, we have $\{X\}$ to be any subset of $\{H\}$ of size $L.$
Thus, 
\begin{align*}
\triangle\leq&Pr_{{\mC^*\in\mC^*_{Rmn+1}}}[|B_R(0,\tau n)\cap \mC^*|\geq L]\\
\leq&\sum_{(X)\in\mathcal{F}_\ell}Pr_{{\mC^*\in\mC^*_{Rmn+1}}}[\{X\}\subseteq\mC^*]\\
=&\sum_{\ell=\lceil{loq_{q}L} \rceil}^L\sum_{(X)\in\mathcal{F}_\ell}Pr_{{\mC^*\in\mC^*_{Rmn+1}}}[\{X\}\subseteq\mC^*]\\
=&\sum_{\ell=\lceil{loq_{q}L} \rceil}^L|\mathcal{F}_\ell|Pr_{{\mC^*\in\mC^*_{Rmn+1}}}[\{X\}\subseteq\mC^*]\\
\leq&\sum_{\ell=\lceil{loq_{q}L} \rceil}^L|\mathcal{F}_\ell|{q}^{(((Rmn+1)+\ell-mn-2)\ell+4(Rmn+1)-1)}.
\end{align*}
The last inequality is from the Lemma~\ref{lemma:4.8}. We want to get a good bound of our probability, so we need to take a reasonable good upper bound for $|\mathcal{F}_\ell|.$ In~\cite{venkat2017}, we bound $|\mathcal{F}_\ell|$ relying on the value of the parameter $\ell$.
\begin{itemize}
\item{Case $1.$ $\ell<\frac{5}{\epsilon}$}
\\In this case, we have $\frac{|\mathcal{F}_\ell|}{|B_R(0,\tau n)|^\ell}$ is a lower bound on the probability that matrices $X_1, X_2, \cdots, X_\ell$ chosen independently and uniformly at random from the rank metric ball $B_R(0,\tau n)$ are
\begin{equation*}
\mid \rm span\{X_1,\cdots, X_\ell\}\cup B_R(0,\tau n)\mid\geq L.
\end{equation*}
By Lemma~\ref{lemma:4.7}, the probability is at most $q^{-2mn}$, thus
\begin{equation*}
|\mathcal{F}_\ell|\leq |B_R(0,\tau n)^\ell|\cdot q^{-2mn}\leq\left(4q^{mn(\tau+\tau \rho-\tau^2\rho)}\right)^\ell\cdot q^{-2mn}.
\end{equation*}
\item{Case $2.$ $\ell\geq\frac{5}{\epsilon}$}
\\We have the simple bound of 
$$|\mathcal{F}_\ell|\leq |B_R(0,\tau n)^\ell|\leq\left(4q^{mn(\tau+\tau \rho-\tau^2\rho)}\right)^\ell.$$ 
\end{itemize}

Finally, taking the value of $R=(1-\tau)(1-\rho\tau)-\epsilon$ into the below inequality, 
\begin{align*}
\triangle
\leq&\sum_{\ell=\lceil{loq_{q}L} \rceil}^L|\mathcal{F}_\ell|{q}^{(((Rmn+1)+\ell-mn-2)\ell+4(Rmn+1)-1)}\\
\leq&\sum_{\ell=\lceil{loq_{q}L} \rceil}^{\lceil\frac{5}{\epsilon}\rceil-1}|\mathcal{F}_\ell|{q}^{(((Rmn+1)+\ell-mn-2)\ell+4(Rmn+1)-1)}+\sum_{\ell=\lceil\frac{5}{\epsilon}\rceil}^L|\mathcal{F}_\ell|{q}^{(((Rmn+1)+\ell-mn-2)\ell+4(Rmn+1)-1)}\\
=&q^{-2mn}q^{4Rmn}\cdot\sum_{\ell=\lceil{loq_{q}L} \rceil}^{\lceil\frac{5}{\epsilon}\rceil-1} 4^\ell q^{{mn\ell}(\tau+\tau \rho-\tau^2\rho+R-1)}+q^{4Rmn}\cdot\sum_{\ell=\lceil\frac{5}{\epsilon}\rceil}^L4^\ell q^{mn\ell(\tau+\tau \rho-\tau^2\rho+R-1)}\\
=&q^{-2mn}q^{4Rmn}\cdot\sum_{\ell=\lceil{loq_{q}L} \rceil}^{\lceil\frac{5}{\epsilon}\rceil-1} 4^\ell q^{{mn\ell}(-\epsilon)}+q^{4Rmn}\cdot\sum_{\ell=\lceil\frac{5}{\epsilon}\rceil}^L4^\ell q^{mn\ell(-\epsilon)}\\
\leq&q^{-2mn}q^{4Rmn}\cdot\sum_{\ell=\lceil{loq_{q}L} \rceil}^{\lceil\frac{5}{\epsilon}\rceil-1} 4^\ell q^{{mn\ell}(-\epsilon)}+q^{4Rmn}\cdot\sum_{\ell=\lceil\frac{5}{\epsilon}\rceil}^L4^\ell q^{-5mn}\\
\leq&q^{-2mn}.
\end{align*}
Thus, an $\F_q$-linear self-orthogonal rank metric code with rate  $R=(1-\tau)(1-\rho\tau)-\epsilon$ is not ${(\tau n,  O_{\tau,q}(\frac{1}{\epsilon}))}$-list decodable with an exponential small probability $q^{-2mn}.$

\end{proof}
\subsection{List Decoding $\F_{q^m}$-linear Self-Orthogonal Rank Metric Codes}
We consider the probability that a random dimension $k$ $\mathbb{F}_{q^m}$-linear code contains a self-orthogonal $k-1$ dimension subcode and a given set $\{v_1,v_2,\cdots, v_\ell\}\subseteq\mathbb{F}_{q^m}^n$ of linearly independent vectors. Let $\mathcal{C}^*_k$ present the set of ${\F_{q^m}}$-linear codes in which every code contains an $\F_{q^m}$-linear dimension  $k-1$ self-orthogonal subcode.

\begin{lemma}~\cite{L.F.J2015}
For any $\mathbb{F}_{q^m}$-linearly independent vectors $x_1, x_2,\cdots, x_\ell$ in $\mathbb{F}_{q^m}^n$ with $\ell\leq k <n/2,$ the probability of a random code $\mathcal{C}^*$ from $\mathcal{C}^*_k$ contains $\{x_1,x_2,\cdots, x_\ell\}$ is
\begin{equation}
Pr_{\mathcal{C}^*\in \mathcal{C}^*_k}[\{x_1,x_2,\cdots, x_\ell\}\subseteq \mathcal{C}]\leq {q^{m}}^{((k+\ell-n-2)\ell+4k-1)}.
\end{equation}
\end{lemma}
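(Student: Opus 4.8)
The plan is to mirror the argument used for the $\F_q$-linear case in Lemma~\ref{lemma:4.8}, but carried out in the vector representation over $\F_{q^m}$ and with the quadratic form $z_1^2+\cdots+z_n^2$ playing the role that $x_1^2+\cdots+x_{mn}^2$ plays there. First I would fix the probability model for a uniformly random $\mathcal{C}^*\in\mathcal{C}^*_k$: by the incremental construction described earlier, such a code arises by drawing a linearly independent self-orthogonal set $\{b_1,\dots,b_{k-1}\}$ — equivalently a random $(k-1)$-dimensional self-orthogonal subcode $D$ — and then adjoining one further vector $b_k\notin D$ chosen uniformly, so that $\mathcal{C}^*=D\oplus\F_{q^m}b_k$. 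With this model the event $\{x_1,\dots,x_\ell\}\subseteq\mathcal{C}^*$ is analysed through the position of $V:=\spn_{\F_{q^m}}\{x_1,\dots,x_\ell\}$ relative to $D$.

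Since $\dim_{\F_{q^m}}V=\ell$ and $\mathcal{C}^*/D$ is one-dimensional, $\dim(V\cap D)\in\{\ell,\ell-1\}$, which gives two cases to union-bound. In the first case ($V\subseteq D$) the configuration $\{x_i\}$ is forced to be self-orthogonal, and I would bound $\Pr[\{x_1,\dots,x_\ell\}\subseteq D]$ by the ratio of the number of $(k-1)$-dimensional self-orthogonal subcodes containing the fixed self-orthogonal $\ell$-space $V$ to the total number of such subcodes, each quantity being estimated step by step by counting solutions of the successive linear-plus-quadratic systems via Lemma~\ref{lemma:4.1}, exactly as in the construction. In the second case some hyperplane $W\subset V$ with $\dim W=\ell-1$ satisfies $W=V\cap D\subseteq D$ while a remaining representative vector lands in $\mathcal{C}^*\setminus D$; summing over the at most ${\ell \brack \ell-1}_{q^m}$ candidate hyperplanes, I would bound the contribution by (probability that a fixed self-orthogonal $(\ell-1)$-space lies in $D$) times (the probability, of order $q^{-m(n-k)}$, that the extra vector lands in the single correct coset direction of $D$ inside $\mathcal{C}^*$). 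Collecting the two contributions and simplifying with the crude two-sided bounds of Lemma~\ref{lemma:4.1} (and of the Gaussian binomial) produces an exponent of the shape $(k+\ell-n-2)\ell+4k-1$ in base $q^m$, with the parity-dependent constants for $q$ even versus odd absorbed into the $4k-1$ term so as to yield the single clean bound ${q^m}^{((k+\ell-n-2)\ell+4k-1)}$.

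The step I expect to be the main obstacle is controlling the error terms in Lemma~\ref{lemma:4.1}: after cutting the form $\sum z_i^2$ by up to $k-1$ linear conditions, the residual quadratic form must still have rank at least $2$ for the advertised solution counts to apply and for the $\pm(q^m-1)q^{m(\cdots)}$ corrections to stay lower-order, and this is precisely what the hypothesis $k<n/2$ is there to guarantee; making this quantitative, uniformly over all admissible tuples $(x_1,\dots,x_\ell)$, is where the real work lies. A secondary point to get right is that the two-case union bound is a genuine upper bound for \emph{every} admissible tuple: if $\{x_i\}$ is contained in no self-orthogonal subcode then the relevant conditional probability is simply $0$, so the worst case is indeed the (partially) self-orthogonal one already accounted for above.
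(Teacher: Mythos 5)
The paper never proves this lemma: both it and its $\F_q$-linear counterpart (Lemma~\ref{lemma:4.8}) are imported from~\cite{L.F.J2015} without argument, so there is no internal proof to compare yours against. Judged on its own, your outline follows the strategy of that source and is structurally sound: modelling $\mathcal{C}^*$ as a random $(k-1)$-dimensional self-orthogonal $D$ extended by one random vector, splitting on $\dim(V\cap D)\in\{\ell,\ell-1\}$ for $V=\spn_{\F_{q^m}}\{x_1,\dots,x_\ell\}$, and observing that non-(partially-)self-orthogonal tuples contribute probability $0$ are exactly the right moves, and your estimate $q^{-m(n-k)}$ for the extension vector hitting the correct coset direction is correct to leading order. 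Two caveats keep this a plan rather than a proof. First, the entire content of the lemma is the specific exponent $(k+\ell-n-2)\ell+4k-1$, and the computation that produces it --- counting $(k-1)$-dimensional totally isotropic subspaces containing a fixed isotropic $\ell$-space (resp.\ $(\ell-1)$-space) via the solution counts of Lemma~\ref{lemma:4.1}, uniformly in the Witt type of the restricted form --- is precisely what you defer; nothing in the outline certifies that the error terms really are absorbed into the ``$4k-1$'' rather than spoiling the bound for small $n-2k$. Second, you should make explicit that the distribution on $\mathcal{C}^*_k$ is the one induced by the incremental construction (for which one must also check that the step-by-step sampling yields the uniform distribution on totally isotropic $(k-1)$-spaces, e.g.\ via Witt's theorem), since the lemma is applied in Theorem~2 to a code built exactly that way; the statement itself is silent on the probability model. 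Neither point is a fatal flaw, but both are where the actual work of the cited reference lives.
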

\begin{theorem}
Let $q$ be prime power and a real $\tau\in(0,1).$ There exist a constant $M$ and all large enough $n$, for small $\epsilon>0,$ an $\F_{q^m}$-linear self-orthogonal rank metric code $\mC\subseteq \F_{q^m}^{n}$ of $R=(1-\tau)(1-\rho\tau)-\epsilon$ is ${(\tau n, \exp(O_{\tau, q}(\frac{1}{\epsilon})))}$-list decodable with high probablility $1-q^{-2mn}.$
\end{theorem}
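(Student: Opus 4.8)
The plan is to mirror the proof of Theorem~\ref{thm:4.6} almost verbatim, replacing $\F_q$-linearity by $\F_{q^m}$-linearity and the scalar $mn$ by $n$ in all the exponents that come from counting dimensions of the code, while keeping $mn$ in the exponents that come from the \emph{size} of the rank metric ball $B_R(0,\tau n)$ (which, as a subset of $\F_{q^m}^n$, still has cardinality roughly $q^{mn(\tau+\tau\rho-\tau^2\rho)}$ by the Gaussian-binomial estimate). Concretely: set $M=5C_{\tau,q}$ with $C_{\tau,q}$ from Lemma~\ref{lemma:4.7}, put $L=\lceil M/\epsilon\rceil$, and let $\mC$ be a random $\F_{q^m}$-linear self-orthogonal code of $\F_{q^m}$-dimension $k=Rn$ inside $\F_{q^m}^n$. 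The target is again a union bound: it suffices to show
\begin{equation*}
\triangle:=Pr_{\mC,\,x\in\F_{q^m}^n}[\,|B_R(x,\tau n)\cap\mC|\geq L\,]<q^{-2mn}\cdot q^{-(1-R)mn}.
\end{equation*}

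First I would reduce the ``exists a bad center $x$'' event to the ``random center'' quantity $\triangle$ exactly as in Theorem~\ref{thm:4.6}: there are at most $q^{Rmn}=q^{(1-R)mn}\cdot q^{-(1-2R)mn}$ bad centers... more precisely $|\mC|=q^{mRn}$ bad centers per code, so the extra factor is $q^{mRn}\le q^{(1-R)mn}$ and the union bound over centers costs that factor. Then, using $\F_{q^m}$-linearity, translate $\mC$ by $x$, pass to $\mathrm{Span}_{\F_{q^m}}(\mC+x)$, and absorb it into a random $(Rn+1)$-dimensional $\F_{q^m}$-subspace $\mC^*$ drawn from $\mC^*_{Rn+1}$. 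Next, introduce the families $\mF_\ell$ of $\ell$-tuples of linearly independent elements of $B_R(0,\tau n)$ whose span meets $B_R(0,\tau n)$ in $\geq L$ points, for $\ell\in[\log_{q^m}L,\,L]$ (note the base of the logarithm is now $q^m$), observe as before that $|B_R(0,\tau n)\cap\mC^*|\geq L$ forces some $\{X\}\subseteq\mC^*$ with $(X)\in\mF=\bigcup_\ell\mF_\ell$, and bound
\begin{equation*}
\triangle\leq\sum_{\ell}|\mF_\ell|\cdot q^{m((k+\ell-n-2)\ell+4k-1)},\qquad k=Rn+1,
\end{equation*}
by the $\F_{q^m}$-version of Lemma~\ref{lemma:4.8} stated just above. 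For $|\mF_\ell|$ I would split into $\ell<5/\epsilon$ and $\ell\geq 5/\epsilon$: in the first case Lemma~\ref{lemma:4.7} (which is already stated for ball elements over $\F_q^{n\times m}$, i.e. the same ball) gives $|\mF_\ell|\leq|B_R(0,\tau n)|^\ell q^{-2mn}$, and in the second case the trivial bound $|\mF_\ell|\leq|B_R(0,\tau n)|^\ell\leq(4q^{mn(\tau+\tau\rho-\tau^2\rho)})^\ell$. Substituting $k=Rn+1$, $R=(1-\tau)(1-\rho\tau)-\epsilon$, and $q^{m(k+\ell-n-2)\ell}=q^{mn\ell(R-1)+O(m\ell)}$ into the sum makes the ball-size exponent $\tau+\tau\rho-\tau^2\rho$ cancel against $1-R$ up to the slack $\epsilon$, leaving geometric sums in $\ell$ with ratio $4q^{-mn\epsilon}<1$; the $q^{-2mn}$ from Case~1 and the lower cutoff $\ell\geq 5/\epsilon$ in Case~2 beat the stray factor $q^{4mRn}\leq q^{2mn}$, so $\triangle\leq q^{-2mn}$ after also accounting for the center-union factor. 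This is the same arithmetic as in Theorem~\ref{thm:4.6} and needs no new idea.

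The one place that genuinely deserves care — and which I expect to be the main obstacle — is making sure the exponents are homogenized correctly between the two sources of $q$-powers: the code-counting probability from the $\F_{q^m}$-analogue of Lemma~\ref{lemma:4.8} is a power of $q^m$ with exponent linear in $n$ (since the code lives in an $n$-dimensional $\F_{q^m}$-space), whereas the ball size $|B_R(0,\tau n)|$ is a power of $q$ with exponent quadratic-ish in $n,m$ (since the ball, viewed inside $\F_q^{n\times m}$, is genuinely $nm$-dimensional over $\F_q$). Writing $q^m$ as $q^m$ throughout, the product $|B_R(0,\tau n)|^\ell\cdot q^{m(k+\ell-n-2)\ell}$ has $q$-exponent $mn\ell(\tau+\tau\rho-\tau^2\rho)+mn\ell(R-1)+O(m\ell)=mn\ell(-\epsilon)+O(m\ell)$, and because $n\to\infty$ the $O(m\ell)$ term is negligible against $mn\ell\epsilon$ once $n\geq$ some threshold depending on $\epsilon,\tau,q$; this is exactly why the theorem says ``all large enough $n$.'' I would double-check the lower summation cutoff: here the list size is $\exp(O_{\tau,q}(1/\epsilon))$ rather than $O_{\tau,q}(1/\epsilon)$, which is forced because the number of tuples we must union-bound over — and correspondingly the range of $\ell$, up to $L=\lceil M/\epsilon\rceil$, together with the $q^m$-base instead of $q$-base — produces an exponential rather than linear list size; the statement already reflects this, so no change is needed, but the write-up should flag that $\log_{q^m}L$ (not $\log_q L$) is the correct lower end of the $\ell$-range and that $L$ itself may be taken as $\exp(c/\epsilon)$ for a suitable $c=c(\tau,q)$, with the geometric series still converging because its ratio $4q^{-mn\epsilon}$ is far below $1$ for large $n$. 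Apart from this bookkeeping the proof is a direct transcription of Theorem~\ref{thm:4.6}'s argument.
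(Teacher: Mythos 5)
Your overall architecture (union bound over centers, translation to the origin, absorption into a random $(Rn+1)$-dimensional code from $\mC^*_{Rn+1}$, the families $\mF_\ell$, the $\F_{q^m}$-analogue of Lemma~\ref{lemma:4.8}, and the final geometric series) is exactly the paper's. The genuine gap is your Case~1, i.e.\ the attempt to import Lemma~\ref{lemma:4.7} to bound $|\mF_\ell|$ for $\ell<5/\epsilon$ and thereby run the proof with $L=\lceil M/\epsilon\rceil$. Lemma~\ref{lemma:4.7} controls the $\F_q$-span of random ball elements, whereas in the $\F_{q^m}$-linear setting $\mF_\ell$ must be defined via the $\F_{q^m}$-span (that is what $\mC^*$ is closed under). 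These are very different events: since multiplication by $\lambda\in\F_{q^m}^*$ preserves rank, the $\F_{q^m}$-span of a \emph{single} nonzero $X_1\in B_R(0,\tau n)$ already contains $q^m\geq L$ elements of $B_R(0,\tau n)$ once $L=O(1/\epsilon)$ and $m$ is large. Hence essentially \emph{every} tuple lies in $\mF_\ell$, $|\mF_\ell|\approx|B_R(0,\tau n)|^\ell$, and the claimed bound $|\mF_\ell|\leq|B_R(0,\tau n)|^\ell q^{-2mn}$ is false. With the trivial bound in its place, the small-$\ell$ terms carry an unabsorbed factor $q^{m(4(Rn+1)-1)}=q^{4Rmn+O(m)}$ and the sum does not come out below $q^{-2mn}$.

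The fix — which is what the paper actually does, and which you relegate to optional ``bookkeeping'' at the end — is to abandon the case split entirely, use only the trivial bound $|\mF_\ell|\leq\bigl(4q^{mn(\tau+\tau\rho-\tau^2\rho)}\bigr)^\ell$ for all $\ell$, and choose $L$ exponentially large in $1/\epsilon$ so that the \emph{lower} summation limit $\lceil\log_{q^m}L\rceil$ is already $\Omega(1/\epsilon)$; then $q^{-mn\ell\epsilon}$ dominates $q^{4Rmn+O(m)}$ for every $\ell$ in range. This is not a cosmetic choice: it is the reason the theorem can only claim list size $\exp(O_{\tau,q}(1/\epsilon))$ (indeed, the needed $L$ is of order $(q^m)^{c/\epsilon}$, consistent with the paper's closing remark that the list size grows polynomially in $q^m$), and there is no known $\F_{q^m}$-span analogue of Lemma~\ref{lemma:4.7} that would restore a linear list size. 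So you should present the exponentially large $L$ as the main argument, not as a fallback, and delete Case~1.
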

\begin{proof}
Put $L=\lceil{\frac{1}{\epsilon}}\rceil,$ $n$ is large enough. Let $\mC$ be an $\F_{q^m}$-linear self-orthogonal rank metric codes with $\dim_{\F_{q^m}}=Rn$ in $\F_{q^m}^n$, the size $|\mC|=q^{Rmn}.$ We want to show that $\mC$ is not ${(\tau, \exp( O_{\tau,q}(\frac{1}{\epsilon})))}$-list decodable, i.e.,
\begin{equation}~\label{eq:4.42}
Pr_{\mC\in\mC_{Rn}}[\exists x\in\F_{q^m}^n, ~|B_R(x,\tau n)\cap \mC|\geq L]<q^{-2mn},
\end{equation}
where $\mC_{Rn}$ denotes the set of $\F_{q^m}$-linear self-orthogonal rank metric codes with dimension $Rn.$

Let $x\in\F_{q^m}^n$ be picked uniformly at random, define
\begin{equation*}
\triangle:=Pr_{{\mC\in\mC_{Rn}}, {x\in\F_{q^m}^n}}[|B_R(x,\tau n)\cap\mC|\geq L]
\end{equation*}
To prove inequality~(\ref{eq:4.42}), we need to show that 
\begin{equation}~\label{eq:4.43}
\triangle<q^{-2mn}\cdot q^{-(1-R)mn}
\end{equation}
The inequality~(\ref{eq:4.43}) is derived from~(\ref{eq:4.42}). For every $\F_{q^m}$-linear $\mC,$ due to "bad" case  $x$ such that $|B_R(x,\tau n)\cap\mC|\geq L,$ there are $q^{Rmn}$ such "bad" $x.$

Since $\mC$ is linear, we have 
\begin{align*}
\triangle=&Pr_{{\mC\in\mC_{Rn}}, {x\in\F_{q^m}^n}}[|B_R(x,\tau n)\cap\mC|\geq L]\\
=&Pr_{{\mC\in\mC_{Rn}}, {x\in\F_{q^m}^n}}[|B_R(0,\tau n)\cap(\mC+x)|\geq L]\\
\leq&Pr_{{\mC\in\mC_{Rn}}, {x\in\F_{q^m}^n}}[|B_R(0,\tau n)\cap \rm Span_{\F_{q^m}}(\mC+x)|\geq L]\\
\leq&Pr_{{\mC^*\in\mC^*_{Rn+1}}}[|B_R(0,\tau n)\cap \mC^*|\geq L],
\end{align*}
where $\mC^*$ is a random $Rn+1$ dimension $\F_{q^m}$-subspace of $\F_{q^m}^n$ containing $\mC$ ( If $x\notin \mC,$then $\mC^*=\rm Span_{\F_{q^m}}(\mC, x);$ otherwise $\mC^*=\rm Span_{\F_{q^m}}(\mC, y),$ where $y$ is picked randomly from $\F_{q^m}^n\backslash \mC$).

For each integer $\ell$, $\ell\in[log_{q^m} L,L].$ Let $\mathcal{F}_\ell$ be the set of all tuples $(x_1, \cdots, x_\ell)\in B_R(0,\tau n)^\ell$ such that $x_1, \cdots, x_\ell$ are linearly independent and 
\begin{equation*}
|\rm span(x_1, \cdots, x_\ell)\cap B_R(0,\tau n)|\geq L.
\end{equation*} 
Hence, 
\begin{eqnarray*}
|\mathcal{F}_t|\leq |B_R(0,\tau n)^t|\leq\left(4q^{mn(\tau+\tau b-\tau^2b)}\right)^\ell.
\end{eqnarray*}

Let $\mathcal{F}=\bigcup_{\ell=\lceil{loq_{q^m}L} \rceil}^L \mathcal{F}_\ell.$ For each $x=(x_1,\cdots, x_\ell)\in\mathcal{F},$ let $\{x\}$ denote the set $\{x_1,\cdots,x_\ell\}.$

We claim that if $|B_R(0,\tau n)\cap \mC^*|\geq L,$ there must exist $x\in\mathcal{F}$ such that $\{x\}\subseteq \mC^*.$ 
Thus, we have
\begin{align*}
\triangle\leq&Pr_{{\mC^*\in\mC^*_{Rn+1}}}[|B_R(0,\tau n)\cap \mC^*|\geq L]\\%
\leq&\sum_{x\in\mathcal{F}_\ell}Pr_{{\mC^*\in\mC^*_{Rn+1}}}[\{x\}\subseteq\mC^*]\\%
=&\sum_{\ell=\lceil{loq_{q^m}L} \rceil}^L\sum_{x\in\mathcal{F}_\ell}Pr_{{\mC^*\in\mC^*_{Rn+1}}}[\{x\}\subseteq\mC^*]\\%
=&\sum_{\ell=\lceil{loq_{q^m}L} \rceil}^L|\mathcal{F}_\ell|Pr_{{\mC^*\in\mC^*_{Rn+1}}}[\{v\}\subseteq\mC^*]\\%
\end{align*}
By taking $R=(1-\tau)(1-\rho\tau)-\epsilon$, we can obtain
\begin{align*}
\triangle\leq&Pr_{{\mC^*\in\mC^*_{Rn+1}}}[|B_R(0,\tau n)\cap \mC^*|\geq L]\\%
\leq&\sum_{\ell=\lceil{loq_{q^m}L} \rceil}^L \left(4q^{mn(\tau+\tau \rho-\tau^2\rho)}\right)^\ell Pr_{{\mC^*\in\mC^*_{Rn+1}}}[\{x\}\subseteq\mC^*]\\%
\leq&\sum_{\ell=\lceil{loq_{q^m}L} \rceil}^L \left(4q^{mn(\tau+\tau \rho-\tau^2\rho)}\right)^\ell q^{m((Rn+1)+\ell-n-2)\ell+4(Rn+1)-1)}\\%
\leq&4^\ell\cdot\sum_{\ell=\lceil{loq_{q^m}L} \rceil}^L q^{mn\ell(\tau+\tau b-\tau^2 \rho+R-1)}\\
\leq&q^{-2mn}.
\end{align*} 

Thus, an $\F_{q^m}$-linear self-orthogonal rank metric code with rate  $R=(1-\tau)(1-\rho\tau)-\epsilon$ is not ${(\tau n, \exp(O_{\tau, q}(\frac{1}{\epsilon})))}$-list decodable with an exponential small probability $q^{-2mn}.$

\end{proof}
\section{Conclusion}
We investigate the list decodable $\F_q$ and $\F_{q^m}$-linear self-orthogonal rank metric codes. We show that the list decodability of $\F_q$-linear self-orthogonal rank metric codes is as good as that of general random rank metric codes as well, which can be list decoded up to the Gilbert-Varshamov bound. By using the same methods for $\mathbb{F}_{q^m}$-linear rank metric codes, our results reveal that the $\F_{q^m}$-linear self-orthogonal rank metric codes is list decodable up to $R=(1-\tau)(1-\rho\tau)-\epsilon$ with exponential list size. The list size of the codes grows polynomially in $q^m$ (rather than just $q$). It is interesting to decrease the list size of $\F_{q^m}$-linear self-orthogonal rank metric codes and $\F_{q^m}$-linear rank metric codes.

\end{document}